\DeclareFontFamily{OT1}{pzc}{}
\DeclareFontShape{OT1}{pzc}{m}{it}{<-> s * [1.10] pzcmi7t}{}
\DeclareMathAlphabet{\mathpzc}{OT1}{pzc}{m}{it}
\newtheorem{theorem}{Theorem}[section]
\newtheorem{lemma}[theorem]{Lemma}
\providecommand{\R}{\mathbb{R}}
\providecommand{\SO}{\mathbf{SO}}
\providecommand{\SE}{\mathbf{SE}}
\providecommand{\so}{\mathfrak{so}}
\providecommand{\se}{\mathfrak{se}}
\providecommand{\Sph}{\mathrm{S}}
\DeclareMathOperator{\tr}{tr}
\providecommand{\trace}[1]{\tr\left(#1\right)}
\DeclareMathOperator{\diag}{diag}
\providecommand{\Lyap}{\mathcal{L}} 
\providecommand{\td}{\mathrm{d}}
\providecommand{\ddt}{\frac{\td}{\td t}}
\providecommand{\ob}[1]{\overline{#1}} 
\providecommand{\scirc}{%
    \hbox{\fontfamily{\rmdefault}\fontsize{0.4\dimexpr(\f@size pt)}{0}\selectfont{\raisebox{-0.52ex}[0ex][-0.52ex]{$\circ$}}}}
\mathchardef\mhyphen="2D
\newcommand{\arxivversion}{build the arxiv version} 
\newenvironment{arxivonly}
{\ifthenelse{\isundefined{\arxivversion}}%
{\expandafter\comment}{}%
}{\ifthenelse{\isundefined{\arxivversion}}%
{\expandafter\endcomment}{}}
\newenvironment{nolcosonly}
{\ifdefined\arxivversion%
\expandafter\comment\fi%
}{\ifdefined\arxivversion%
\expandafter\endcomment\fi}
\begin{document}

\title{Constructive Equivariant Observer Design for Inertial Velocity-Aided Attitude%
}
\headertitle{Constructive Equivariant Observer Design for Inertial Velocity-Aided Attitude}

\author{
\href{https://orcid.org/0000-0003-4391-7014}{\includegraphics[scale=0.06]{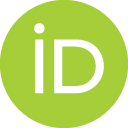}\hspace{1mm}
Pieter van Goor}
\\
    Systems Theory and Robotics Group \\
	Australian National University \\
    ACT, 2601, Australia \\
    \texttt{Pieter.vanGoor@anu.edu.au} \\
	\And	\href{https://orcid.org/0000-0002-7779-1264}{\includegraphics[scale=0.06]{orcid.png}\hspace{1mm}
    Tarek Hamel}
\\
    I3S (University C\^ote d'Azur, CNRS, Sophia Antipolis) \\
    and Insitut Universitaire de France \\
    \texttt{THamel@i3s.unice.fr} \\
	\And	\href{https://orcid.org/0000-0002-7803-2868}{\includegraphics[scale=0.06]{orcid.png}\hspace{1mm}
    Robert Mahony}
\\
    Systems Theory and Robotics Group \\
	Australian National University \\
    ACT, 2601, Australia \\
	\texttt{Robert.Mahony@anu.edu.au} \\
}

\maketitle

\vspace{1cm}

\begin{abstract} 
Inertial Velocity-Aided Attitude (VAA), the estimation of the velocity and attitude of a vehicle using gyroscope, accelerometer, and inertial-frame velocity (e.g. GPS velocity) measurements,  is an important problem in the control of Remotely Piloted Aerial Systems (RPAS).
Existing solutions provide limited stability guarantees, relying on local linearisation, high gain design, or assuming specific trajectories such as constant acceleration of the vehicle. 
This paper proposes a novel non-linear observer for inertial VAA with almost globally asymptotically and locally exponentially stable error dynamics. 
The approach exploits Lie group symmetries of the system dynamics to construct a globally valid correction term. 
To the authors' knowledge, this construction is the first observer to provide almost global convergence for the inertial VAA problem.
The observer performance is verified in simulation, where it is shown that the estimation error converges to zero even with an extremely poor initial condition.
\end{abstract} 


\section{Introduction}

Attitude estimation is a fundamental problem in the control of Remotely Piloted Aerial Systems (RPAS).
Many of the most popular approaches from the last 15 years rely on inertial measurement unit (IMU) signals, including gyroscope, accelerometer, and sometimes magnetometer measurements.
A common assumption used in observer design, such as in \cite{2008_mahony_NonlinearComplementaryFilters,2010_martin_DesignImplementationLowcost,2010_pflimlin_ModelingAttitudeControl,2014_hua_ImplementationNonlinearAttitude}, is that the accelerometer signal is dominated by gravity vector.
While this assumption has proven useful in many practical situations, it is unreliable when a RPAS experiences large accelerations such as when sharply changing direction or making turns at high speeds.
A number of authors have incorporated either body-fixed frame or inertial-frame measurements of the RPAS's velocity to overcome this problem.
Estimating the attitude of a RPAS using velocity measurements is referred to as the velocity-aided attitude (VAA) problem.
Velocity aided attitude estimation algorithms typically include an estimator for the vehicle velocity, as well as the primary estimate of the vehicle attitude, that acts as auxiliary state and provides the connection between the velocity measurement and the attitude estimate.

There are two important variations of the VAA problem, depending on whether the velocity of the vehicle is measured in the body-fixed frame (such as provided by an air-data system or a doppler radar), or in the inertial frame (such as provided by a GPS).
Some early solutions to the body-fixed VAA problem were based on linearisation, such as in \citep{2008_bonnabel_SymmetryPreservingObservers,2008_martin_InvariantObserverEarthVelocityAided}
Later solutions tended to be constructive in their design and in some cases provide guarantees of almost-globally asymptotic stability \citep{2013_troni_PreliminaryExperimentalEvaluation,2014_allibert_EstimatingBodyfixedFrame,2016_allibert_VelocityAidedAttitude,2016_hua_StabilityAnalysisVelocityaided,2020_benallegue_VelocityaidedIMUbasedAttitude,2021_wang_NonlinearAttitudeEstimation}.

The solutions to inertial VAA problem solutions are more complex and provide fewer stability guarantees, in general, than the solutions to body-fixed VAA.
In some of the first work on the topic, \cite{2010_hua_AttitudeEstimationAccelerated} provided two observers for the inertial VAA problem.
The first of these observers features semi-global stability by using a high gain design, and the second features almost-global convergence when the vehicle's acceleration is constant by using a virtual $3 \times 3$ matrix in the design.
\cite{2011_roberts_AttitudeEstimationAccelerating} additionally used a magnetometer measurement to provide two observers with semi-global convergence guarantees by using high gain designs.
\cite{2012_grip_NonlinearObserverIntegration} developed an observer for the position, velocity, and attitude of a RPAS using GPS and IMU measurements, and showed the observer error dynamics to be semi-globally exponentially stable by embedding $\SO(3) \hookrightarrow \R^9$ to overcome the topological constraints of the rotation group \citep{2000_bhat_TopologicalObstructionContinuous}.
However, the gains must be tuned carefully to ensure stability, and the attitude estimate provided by the observer is not guaranteed to be continuous.
\cite{2013_dukan_IntegrationFilterAPS} followed a similar approach to create an observer for attitude, position and velocity that takes accelerometer and gyroscope biases into account.
While their experimental results showed that the observer can perform well in practice, the authors did not provide any proof of stability.
A simplified model of quadrotor dynamics was used by \cite{2016_martin_SemiglobalModelbasedState} to develop a semi-globally asymptotically stable observer for the vehicle's pitch, roll, and horizontal velocity, where a gyroscope, an accelerometer, and knowledge of the thrust and rotor drag provide measurements of the velocity and gravity direction directly.
Recently, \cite{2017_hua_RiccatiNonlinearObserver} used a mixed measurement of the vertical component of the inertial frame velocity and the horizontal components of the body-fixed frame velocity to propose a Riccati observer for the inertial VAA problem. 
They proved the error dynamics were locally exponentially stable and used simulation to demonstrate a large domain of attraction 
\citep{2017_hua_RiccatiNonlinearObserver}.

In this paper, we consider the inertial VAA problem without magnetometer measurements.
The design procedure developed by \cite{2021_vangoor_AutonomousErrorConstructive} is applied to propose an observer architecture with synchronous error dynamics.
It is shown that the resulting observer has almost-globally asymptotically stable attitude error dynamics and globally exponentially stable velocity error dynamics.
To the authors' knowledge, this observer is the first that guarantees almost-global convergence, and without assuming constant acceleration.
This contrasts to the previous work, where at most semi-global stability results are known.
Moreover, the auxiliary state used in the proposed observer design is naturally connected to the Lie group structure of the system dynamics viewed through the framework of \citep{2021_vangoor_AutonomousErrorConstructive}.
Simulation results are provided to demonstrate the observer's performance.

\section{Preliminaries}

For an introduction to matrix Lie groups, the authors recommend \cite{2015_hall_LieGroupsLie}.
The special orthogonal group is the Lie group of 3D rotations, defined
\begin{align*}
    \SO(3) := \{
        R \in \R^{3 \times 3} \mid R^\top R = I_3, \; \det(R) = 1
    \}.
\end{align*}
For any vector $\Omega \in \R^3$, define
\begin{align*}
    \Omega^\times
    = \begin{pmatrix}
        0 & -\Omega_3 & \Omega_2 \\
        \Omega_3 & 0 & -\Omega_1 \\
        -\Omega_2 & \Omega_1 & 0
    \end{pmatrix}.
\end{align*}
Then $\Omega^\times v = \Omega \times v$ for any $v \in \R^3$ where $\times$ is the usual vector (cross) product.
The Lie algebra of $\SO(3)$ is defined
\begin{align*}
    \so(3) := \{
        \Omega^\times \in \R^{3 \times 3} \mid \Omega \in \R^3
    \}.
\end{align*}
The projector from $\mathbb{P}_{\so(3)} : \R^{3 \times 3} \to \so(3)$ is defined
\begin{align*}
    \mathbb{P}_{\so(3)}(M) := \frac{1}{2}(M - M^\top).
\end{align*}
For any two vectors $a,b \in \R^3$, one has the following identities:
\begin{align}
    a^\times b &= -b^\times a, &
    (a^\times)^\top &= -a^\times, \notag \\
    a^\times b^\times &= b a^\top - a^\top b I_3, &
    (a \times b)^\times &= b a^\top - a b^\top.
    \label{eq:so3_identities}
\end{align}

The special Euclidean group and its Lie algebra are defined
\begin{align*}
    \SE(3) &:= \left\{
    \begin{pmatrix}
        R & v \\ 0_{1\times 3} & 1
    \end{pmatrix} \in \R^{4 \times 4}
    \;\middle\vert\;
     R \in \SO(3), \; v \in \R^3
    \right\}, \\
    \se(3) &:= \left\{
    \begin{pmatrix}
        \Omega^\times & u \\ 0_{1\times 3} & 0
    \end{pmatrix} \in \R^{4 \times 4}
    \;\middle\vert\;
    \Omega, u \in \R^3
    \right\}.
\end{align*}
An element of $\SE(3)$ may be denoted $X = (R,v)$, where $R \in \SO(3)$ and $v \in \R^3$ for convenience.
Likewise, an element of $\se(3)$ may be denoted $\Delta = (\Omega_\Delta, u_\Delta)$, where $\Omega_\Delta, u_\Delta \in \R^3$.

\section{Problem Description}

Consider a vehicle equipped with an inertial measurement unit (IMU).
Let $\{0\}$ denote the inertial frame and let $\{B\}$ denote the IMU (body) frame.
Then let the attitude and velocity of $\{B\}$ with respect to $\{0\}$ be denoted $R \in \SO(3)$ and $v \in \R^3$, respectively.
The angular velocity $\Omega \in \R^3$ and the linear acceleration $a \in \R^3$ are measured by the IMU.
The dynamical model of $R$ and $v$ considered is
\begin{align}
    \dot{R} &= R \Omega^\times, &
    \dot{v} &= R a + g,
    \label{eq:system_dynamics}
\end{align}
where $g \in \R^3$ is the gravity vector in the inertial frame.
The problem is to design an observer for $R$ using a measurement of the vehicle's velocity in the inertial frame, that is,
\begin{align}
    h(R, v) = v.
    \label{eq:system_measurement}
\end{align}
The velocity state $v$ must also be estimated due to the coupling of the equations of motion \eqref{eq:system_dynamics} and the measurement \eqref{eq:system_measurement}.

\section{Observer Design}

\subsection{Equivariant Observer Architecture}

Identify the homogeneous matrix
\[
X = \begin{pmatrix} R & v \\ 0_{1 \times 3} & 1 \end{pmatrix} \in \SE(3)
\]
with the state $(R,v)$, noting that although this is not a rigid-body transformation, the symmetry properties of $\SE(3)$ can still be exploited for the VAA state.
We will write $X = (R,v) \in \SE(3)$ to save space in the sequel.
Similarly we write $U = (\Omega, a)$ and $G = (0,g)$ for
\begin{align*}
    U &= \begin{pmatrix} \Omega^\times & a \\ 0_{1 \times 3} & 0 \end{pmatrix} \in \se(3), &
    & \text{and} &
    G &= \begin{pmatrix} 0_{3 \times 3} & g  \\ 0_{1 \times 3} & 0 \end{pmatrix} \in \se(3).
\end{align*}
The lifted dynamics \eqref{eq:system_dynamics} on the Lie group $\SE(3)$ may be written as
\begin{align}
    \dot{X} = X U + G X.
    \label{eq:system_dynamics_se3}
\end{align}
In other words, the Lie group dynamics have both left- and right-invariant components corresponding to body- and spatial- velocities.

Following the observer design procedure described in \citep{2021_vangoor_AutonomousErrorConstructive}, let $\hat{X}, \hat{Z} \in \SE(3)$ be the state estimate and auxiliary state, respectively, and define
\begin{align}
    \dot{\hat{X}} &= \hat{X} U + G \hat{X} + \Delta \hat{X}, &
    \dot{\hat{Z}} &= G \hat{Z} + \hat{Z} \Gamma,
    \label{eq:observer_architecture}
\end{align}
where $\Delta, \Gamma \in \se(3)$ are correction terms that remain to be chosen.

\begin{lemma}
Define an error $\bar{E} := \hat{Z}^{-1} X \hat{X}^{-1} \hat{Z}$.
The system dynamics \eqref{eq:system_dynamics_se3} and the observer internal model \eqref{eq:observer_architecture} are $\ob{E}$-synchronous \citep{2021_vangoor_AutonomousErrorConstructive}; i.e. the dynamics of $\ob{E}$ depend only on the chosen correction terms and $\ddt \ob{E} = 0$ when the correction terms are zero.
\label{lem:synchronous_error}
\end{lemma}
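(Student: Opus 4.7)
The plan is to differentiate $\bar{E} = \hat{Z}^{-1} X \hat{X}^{-1} \hat{Z}$ directly using the product rule and verify that every dependence on the measured input $U$ and the gravity element $G$ cancels, leaving $\dot{\bar{E}}$ as a function only of $\bar{E}$ and the correction terms $\Delta, \Gamma$ (where $\Delta$ appears through the observer-constructible conjugate $\hat{Z}^{-1}\Delta\hat{Z}$). Synchronicity in the sense of \cite{2021vangoorAutonomousErrorConstructive} then follows directly from the definition.

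To set this up I would first compile the time derivatives of each of the four factors in $\bar{E}$. From \eqref{eq:system_dynamics_se3} and \eqref{eq:observer_architecture}, together with the standard identity $\tfrac{d}{dt}(A^{-1}) = -A^{-1} \dot{A} A^{-1}$, one obtains
\begin{align*}
\dot{X} &= XU + GX, &
\tfrac{d}{dt} \hat{X}^{-1} &= -U \hat{X}^{-1} - \hat{X}^{-1} G - \hat{X}^{-1} \Delta, \\
\dot{\hat{Z}} &= G \hat{Z} + \hat{Z} \Gamma, &
\tfrac{d}{dt} \hat{Z}^{-1} &= - \hat{Z}^{-1} G - \Gamma \hat{Z}^{-1}.
\end{align*}
Expanding $\dot{\bar{E}}$ by the product rule on four factors then produces eight terms, which I would group according to whether they contain $U$, $G$, $\Delta$, or $\Gamma$.

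The heart of the proof is the pattern of cancellations. The two $U$-contributions, $+\hat{Z}^{-1} X U \hat{X}^{-1} \hat{Z}$ from $\dot{X}$ and $-\hat{Z}^{-1} X U \hat{X}^{-1} \hat{Z}$ from $\tfrac{d}{dt}\hat{X}^{-1}$, cancel exactly, which is precisely the body-velocity symmetry being exploited. The four $G$-contributions also cancel in two matched pairs: the terms placed to the left of $X$ (from $\dot{X}$ and $\tfrac{d}{dt}\hat{Z}^{-1}$) annihilate, as do those placed to the right of $\hat{X}^{-1}$ (from $\tfrac{d}{dt}\hat{X}^{-1}$ and $\dot{\hat{Z}}$); this is the spatial-velocity symmetry enforced by the particular choice of $\hat{Z}$-dynamics in \eqref{eq:observer_architecture}. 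Collecting what remains yields
\begin{align*}
\dot{\bar{E}} = \bar{E}\Gamma - \Gamma \bar{E} - \bar{E}\,(\hat{Z}^{-1} \Delta \hat{Z}),
\end{align*}
after writing $\hat{Z}^{-1} X \hat{X}^{-1} \Delta \hat{Z} = \bar{E}\,(\hat{Z}^{-1}\Delta \hat{Z})$. This is autonomous in $\bar{E}$ modulo observer-side quantities, as required.

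The step I expect to demand the most care is not conceptual but organisational: the expansion produces eight terms of similar shape, and each cancellation must be attributed to the correct pair, keeping careful track of the left/right placement of $U$ (attached on the right of $X$, as a body velocity) and of $G$ (attached on the left of $X$, as a spatial velocity). The conceptual insight underlying the bookkeeping is that the auxiliary state $\hat{Z}$ has dynamics deliberately tuned so that its right-invariant drift $G\hat{Z}$ annihilates the spatial drift of $X$, while the standard equivariance construction handles the body-velocity $U$ just as in the purely left-invariant case.
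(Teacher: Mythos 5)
Your proposal is correct and follows essentially the same route as the paper: differentiate $\bar{E}$ by the product rule, observe that the $U$-terms and the two pairs of $G$-terms cancel, and arrive at $\dot{\bar{E}} = \bar{E}\Gamma - \Gamma\bar{E} - \bar{E}(\hat{Z}^{-1}\Delta\hat{Z})$, which vanishes when $\Delta = \Gamma = 0$. The only cosmetic difference is that you pre-simplify the derivatives of the inverses before expanding, while the paper substitutes the raw dynamics and cancels inside the conjugation; the computation and conclusion are identical.
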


\begin{proof}
Direct computation provides
\begin{align*}
    \dot{\bar{E}}
    &= - \hat{Z}^{-1} \dot{\hat{Z}} \hat{Z}^{-1} X \hat{X}^{-1} \hat{Z}
    + \hat{Z}^{-1} \dot{X} \hat{X}^{-1} \hat{Z}
    \\ &\phantom{=}
    - \hat{Z}^{-1} X \hat{X}^{-1} \dot{\hat{X}} \hat{X}^{-1} \hat{Z}
    + \hat{Z}^{-1} X \hat{X}^{-1} \dot{\hat{Z}}, \\
    &= - \hat{Z}^{-1} G X \hat{X}^{-1} \hat{Z}
    - \Gamma \bar{E}
    + \hat{Z}^{-1} (XU + GX) \hat{X}^{-1} \hat{Z}
    \\ &\phantom{=}
    - \hat{Z}^{-1} X \hat{X}^{-1} (\hat{X} U + G \hat{X} + \Delta \hat{X}) \hat{X}^{-1} \hat{Z}
    \\ &\phantom{=}
    + \hat{Z}^{-1} X \hat{X}^{-1} G \hat{Z}
    + \bar{E} \Gamma, \\
    &= - \Gamma \bar{E}
    - \hat{Z}^{-1} X \hat{X}^{-1} (\Delta \hat{X}) \hat{X}^{-1} \hat{Z}
    + \bar{E} \Gamma, \\
    &= - \Gamma \bar{E} + \bar{E} \Gamma
    - \bar{E} (\hat{Z}^{-1} \Delta \hat{Z}).
\end{align*}
This shows that, indeed, the $\dot{\bar{E}} = 0$ whenever the correction terms $\Delta, \Gamma$ are chosen to be zero, proving the result.
\end{proof}

From \eqref{eq:observer_architecture}, let $\hat{Z} = (R_Z, z) \in \SE(3)$ and $\Gamma = (\Omega_\Gamma, u_\Gamma) \in \se(3)$.
Then,
\begin{align*}
    \dot{R}_Z &= R_Z \Omega_\Gamma^\times, &
    \dot{z} &= g + R_Z u_\Gamma.
\end{align*}
By choosing $R_Z(0) = I_3$ and $\Omega_\Gamma \equiv 0$, it follows that $R_Z \equiv I_3$ and
\begin{align}
    \dot{z} &= g + u_\Gamma.
\label{eq:dot_z}
\end{align}
For the proposed observer design, only the $z$ term is used and attitude component of $R_Z \equiv I_3$ of $\hat{Z}$ is set to the identity. 
Thus, we will only consider $\Gamma = (0, u_\Gamma) \in \se(3)$ in the sequel.
Similarly, we will write \eqref{eq:dot_z} instead of the full $\hat{Z}$ dynamics \eqref{eq:observer_architecture} to emphasise the simplicity of the proposed observer.
In this case $\bar{E} := \hat{Z}^{-1} X \hat{X}^{-1} \hat{Z} = (R_{\bar{E}}, v_{\bar{E}})$ expands to
\begin{equation}
    R_{\bar{E}} = R \hat{R}^\top, \mbox{ and } v_{\bar{E}} = v - z - R \hat{R}^\top (\hat{v} - z).
    \label{eq:error_definitions}
\end{equation}

\subsection{Observer Design}

\begin{theorem}\label{thm:complete_observer_design}
Consider the system dynamics \eqref{eq:system_dynamics}.
Let $\hat{R} \in \SO(3)$, $\hat{v}, z \in \R^3$, and define the observer dynamics \eqref{eq:observer_architecture} written in coordinate form
\begin{subequations}\label{eq:observer_dynamics}
\begin{align}
    \dot{\hat{R}} &= \hat{R} \Omega^\times + \Omega_\Delta^\times \hat{R},&
    \hat{R}(0) &= \hat{R}_0  \label{eq:observerR} \\
    \dot{\hat{v}} &= \hat{R} a + g + \Omega_\Delta^\times \hat{v} + u_\Delta, &
    \hat{v}(0) &= \hat{v}_0, \label{eq:observer_v} \\
    \dot{z} & = g + u_\Gamma, &
    z(0) &= z_0,    \label{eq:z}
\end{align}
\end{subequations}
where $(\hat{R}_0, \hat{v}_0) \in \SE(3)$ and $z_0 \in \R^3$ are arbitrary initial conditions.
Choose the innovation terms $(\Omega_\Delta, u_\Delta, u_\Gamma)$ as follows:
\begin{subequations}\label{eq:correction_terms}
\begin{align}
    \Omega_\Delta &= c (\hat{v} - z) \times (v - z),  \\
    u_\Delta &= k(v - \hat{v}) - c ((\hat{v} - z) \times (v - z)) \times z , \\
    u_\Gamma &= k(v - z),
\end{align}
\end{subequations}
with $k, c > 0$ positive gains.
Suppose that the measured velocity \eqref{eq:system_measurement} is bounded and $R a$ is a uniformly continuous and  persistently exciting signal; that is, there exist constants $\mu, \delta > 0$ such that, for all $b \in \Sph^2$ and time $t \geq 0$, there is a $\tau \in [t, t+\delta)$ satisfying
\begin{align}
    \left\vert b^\times (R(\tau) a(\tau)) \right\vert \geq \mu.
    \label{eq:persistent_excitation_ra}
\end{align}
Let $\bar{E} = (R_{\bar{E}}, v_{\bar{E}})$ as in \eqref{eq:error_definitions}.
Then
\begin{enumerate}
    \item The solution $z$ is uniformly continuous $\forall t \geq 0$, and $\bar{E}$ converges to $\bar{E}_s \cup \bar{E}_u$ such that $\bar{E}_s= \{ (I,0) \}$ and
    $$
    \bar{E}_u=\{ (Q, 0) \in \SE(3) \mid \tr(Q) = -1 \}.
    $$
    \item The set $\bar{E}_u$ is the set of unstable equilibria.
    That is, for any point in $\bar{E}_u$ and any neighbourhood ${\cal U}$ of that point, there exists an initial condition in $\mathcal{U}$ such that the error dynamics with that initial condition converge to $\bar{E}_s$.
    \item $\bar{E}_s= \{ (I,0) \}$ is almost-globally asymptotically and locally exponentially stable.
    Moreover, if $\bar{E} \in \bar{E}_s$ then $\hat{R} = R$ and $\hat{v} = v$.
\end{enumerate}

\end{theorem}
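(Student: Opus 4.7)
The plan is to exploit the synchronous error result of Lemma~\ref{lem:synchronous_error} by substituting the innovation choices \eqref{eq:correction_terms} into the general formula $\dot{\bar E} = -\Gamma\bar E + \bar E \Gamma - \bar E(\hat Z^{-1}\Delta\hat Z)$ from its proof, specialised to $\hat Z = (I,z)$ and $\Gamma = (0,u_\Gamma)$. Introducing the abbreviations $q := v - z$ and $w := \hat v - z$, the key algebraic observation is that the $-((\hat v - z)\times(v-z))\times z$ contribution to $u_\Delta$ is engineered precisely to cancel the $\Omega_\Delta^\times z$ term arising in the adjoint expression $\hat Z^{-1}\Delta\hat Z$, so that $\Omega_\Delta^\times z + u_\Delta = k(v-\hat v)$. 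Combined with $u_\Gamma = kq$ and $v_{\bar E} = q - R_{\bar E} w$, the error equations collapse to
\[
\dot v_{\bar E} = -k\, v_{\bar E}, \quad \dot R_{\bar E} = -R_{\bar E}\, \Omega_\Delta^\times, \quad \Omega_\Delta = c\, w\times q,
\]
so $v_{\bar E}(t) = e^{-kt}v_{\bar E}(0)$ is globally exponentially stable and the analysis reduces to the attitude error driven by the time-varying signals $w$ and $q = v_{\bar E} + R_{\bar E} w$.

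For the attitude error I would use the Lyapunov candidate $V_R := \tr(I - R_{\bar E}) \ge 0$ on $\SO(3)$ and compute, using the identity $(a\times b)^\times = ba^\top - ab^\top$ from \eqref{eq:so3_identities} together with $q = v_{\bar E} + R_{\bar E} w$,
\[
\dot V_R = \tr(R_{\bar E}\Omega_\Delta^\times) = c\, w^\top (R_{\bar E} - R_{\bar E}^\top)\, v_{\bar E} + c\, w^\top (R_{\bar E}^2 - I)\, w.
\]
An axis--angle expansion $R_{\bar E} = \exp(\theta n^\times)$ gives $w^\top (R_{\bar E}^2 - I) w = -2\sin^2\theta\, \|w_\perp\|^2 \le 0$, where $w_\perp$ is the component of $w$ orthogonal to $n$, so the second term is non-positive while the first is an exponentially decaying perturbation driven by $v_{\bar E}$. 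Boundedness of $w$ (from its equation $\dot w = \hat R a + \Omega_\Delta\times w - kw$ with bounded inputs) makes $\dot V_R$ uniformly continuous and integrable, and Barbalat's lemma then yields $w^\top (R_{\bar E}^2 - I) w \to 0$.

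To complete item~1 I would use the persistence-of-excitation hypothesis \eqref{eq:persistent_excitation_ra}. Asymptotically $w$ must lie in $\ker(I - R_{\bar E}^2)$; unless $R_{\bar E}^2 = I$, this kernel is the one-dimensional rotation axis of $R_{\bar E}$, so $w$ would have to align with a single fixed direction, which the PE of $\hat R a$ (inherited from PE of $R a$ through the observer dynamics) precludes. Hence $R_{\bar E}^2 \to I$, i.e., $R_{\bar E} \to \{I\}\cup\{Q : \tr Q = -1\}$. For items~2 and~3, linearising around $(I,0)$ with $R_{\bar E} = \exp(\theta^\times)$ produces the cascade
\[
\dot\theta = -c\bigl[\|w\|^2 I - ww^\top\bigr]\theta - c\, w\times v_{\bar E}, \quad \dot v_{\bar E} = -kv_{\bar E},
\]
whose coefficient matrix $\|w\|^2 I - ww^\top$ is PE-uniformly positive definite under the propagated PE of $w$, giving uniform local exponential stability of $\bar E_s$. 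Linearising around a point $(Q,0)\in\bar E_u$ exhibits $V_R = 4$ as the global maximum of $V_R$ attained precisely on the two-dimensional submanifold $\bar E_u$; since $\dot V_R \le 0$ globally and $\dot V_R < 0$ in directions transverse to $\bar E_u$ (modulo the exponentially decaying coupling), a Chetaev-type argument produces an unstable direction, so the basin of attraction of $\bar E_u$ has empty interior and $\bar E_s$ is almost-globally asymptotically stable; the equivalence $\bar E = (I,0) \iff (\hat R,\hat v) = (R,v)$ is immediate from \eqref{eq:error_definitions}.

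The main obstacle will be making the informal ``propagation of PE'' argument above fully rigorous: the given inertial-frame PE condition on $Ra$ must be translated into the two distinct statements required, namely (a) that $w$ cannot asymptotically align with the rotation axis of a non-identity limit $R_{\bar E}$ and (b) that $\int_t^{t+T}[\|w\|^2 I - ww^\top]\,ds$ is uniformly positive definite. Both rely on pushing the excitation through the first-order linear equation $\dot w = \hat R a + O(w, v_{\bar E}) - kw$ while tracking the feedback between the uncertain attitude $\hat R$ and the signal $w$, and this time-varying coupling is where the reduced attitude dynamics are hardest to handle.
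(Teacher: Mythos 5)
Your derivation of the error dynamics ($\dot v_{\bar E}=-kv_{\bar E}$, $\dot R_{\bar E}=-R_{\bar E}\Omega_\Delta^\times$), the trace Lyapunov descent, and the linearisation about $(I,0)$ all match the paper's computations, and the cascade treatment of $v_{\bar E}$ in place of the paper's single weighted Lyapunov function $\frac12|R_{\bar E}-I|^2+\frac{\alpha}{2}|v_{\bar E}|^2$ is a legitimate variation. The genuine gap is exactly the one you flag at the end, and it is not a technicality: you pose the persistence-of-excitation transfer in terms of $w=\hat v-z$ and $\hat R a$, but pointwise PE is \emph{not} preserved under an arbitrary time-varying rotation ($\hat R a = R_{\bar E}^\top Ra$; a rotating PE signal can be de-rotated into a constant, non-PE one), and the equation $\dot w=\hat Ra+\Omega_\Delta\times w-kw$ is coupled back to the observer state, so "inherited through the observer dynamics" begs the question. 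The paper avoids this entirely by working with $q=v-z$, whose dynamics $\dot q=Ra-kq$ involve only the true state and the measurement, so PE of $Ra$ passes to $q$ by the clean low-pass filter result (Lemma~\ref{lem:persistent}); the Lyapunov derivative is then expressed through $(R_{\bar E}^2-I)q$ rather than $w^\top(R_{\bar E}^2-I)w$. A second missing step: your inference "asymptotically $w$ lies in $\ker(I-R_{\bar E}^2)$, which is one-dimensional, so PE forces $R_{\bar E}^2\to I$" tacitly treats the rotation axis as fixed, but $R_{\bar E}$ need not converge a priori and its axis could track the excitation. The paper first proves $\tfrac{\td}{\td t}R_{\bar E}^2\to 0$ (via the $\tfrac{\td}{\td t}\tr(R_{\bar E})\to0$ dichotomy: either $\mathbb{P}_{\so(3)}(R_{\bar E})\to0$ or $\Omega_\Delta\to0$) and only then applies Lemma~\ref{lem:persistence_of_excitation}, whose hypothesis $\dot Q\to0$ exists precisely for this reason; note in particular that your Barbalat conclusion $\sin\theta\,|w_\perp|\to0$ does not give $\Omega_\Delta\to0$ near $\theta=\pi$, so slow variation is not automatic.

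For item 3, the chain "Chetaev-type instability $\Rightarrow$ basin of $\bar E_u$ has empty interior $\Rightarrow$ almost-global asymptotic stability" is not a proof: for a time-varying system, instability of the bad equilibrium set does not by itself imply that the set of initial conditions converging to it is negligible, and "empty interior" is anyway weaker than the measure-zero statement needed. The paper's item 2 establishes only what your perturbation argument would (existence, in every neighbourhood of $\bar E_u$, of escaping initial conditions, via the second-order expansion of $\tr(I-R_{\bar E}e^{s\omega^\times})$), and then obtains almost-global asymptotic plus local exponential stability of $\bar E_s$ by invoking \citep[Theorem 4.3]{2012trumpfAnalysisNonLinearAttitude} together with the uniform second-eigenvalue PE bound of Lemma~\ref{lem:equiv-lemma} applied to $v-z$. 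Your linearised cascade with the gramian $\int(|w|^2I-ww^\top)\,\td\tau$ is the right object for local exponential stability (it is essentially Lemma~\ref{lem:equiv-lemma}), but it should be formulated with $q=v-z$, and the almost-global claim needs either the cited theorem or a substitute argument you have not supplied.
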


%


\begin{proof}

\underline{Proof of item 1):} Recalling \eqref{eq:z} along with \eqref{eq:correction_terms} and the fact that $v$ is bounded by assumption, it is straightforward to verify that $z$ is bounded and uniformly continuous.

By direct computation, the dynamics of $R_{\bar{E}}$ are
\begin{align*}
    \dot{R}_{\bar{E}}
    &= (R\Omega^\times) \hat{R}^\top + R (\hat{R} \Omega^\times + \Delta^\times \hat{R}), \\
    &= R\Omega^\times \hat{R}^\top - R \Omega^\times \hat{R}^\top - R \hat{R} \Delta^\times, \\
    &= - R_{\bar{E}} \Omega_\Delta^\times.
\end{align*}
Likewise, the dynamics of $v_{\bar{E}}$ are
\begin{align*}
    \dot{v}_{\bar{E}}
    &= (R a + g) - (g + u_\Gamma) - (- R_{\bar{E}} \Omega_\Delta^\times) \hat{v}
    \\ &\phantom{=}
    - R_{\bar{E}}(\hat{R} a + g + \Omega_\Delta^\times \hat{v} + u_\Delta)
    \\ &\phantom{=}
    + (- R_{\bar{E}} \Omega_\Delta^\times) z + R_{\bar{E}} (g + u_\Gamma), \\
    &= R a - u_\Gamma + R_{\bar{E}} \Omega_\Delta^\times \hat{v}
    \\ &\phantom{=}
    - R a - R_{\bar{E}} g - R_{\bar{E}} \Omega_\Delta^\times \hat{v} - R_{\bar{E}} u_\Delta
    \\ &\phantom{=}
    - R_{\bar{E}} \Omega_\Delta^\times z + R_{\bar{E}} g + R_{\bar{E}} u_\Gamma, \\
    &= - u_\Gamma
     - R_{\bar{E}} u_\Delta
    - R_{\bar{E}} \Omega_\Delta^\times z + R_{\bar{E}} u_\Gamma.
\end{align*}
By substituting in the correction terms \eqref{eq:correction_terms},
\begin{align}
    \dot{v}_{\bar{E}}
    &= -u_\Gamma
     - R_{\bar{E}} u_\Delta
    - R_{\bar{E}} \Omega_\Delta^\times z
    + R_{\bar{E}} u_\Gamma, \notag \\
    &= -k(v - z)
    - R_{\bar{E}} (k(v - \hat{v}) - c ((\hat{v} - z) \times (v - z)) \times z ) \notag
    \\ &\phantom{=}
    - R_{\bar{E}} (c (\hat{v} - z) \times (v - z))^\times z
    + R_{\bar{E}} k(v - z), \notag \\
    &= -k(v - z)
    - k R_{\bar{E}} (v - \hat{v})
    + k R_{\bar{E}} (v - z), \notag \\
    &= -k(v - z) + k R_{\bar{E}} (\hat{v} - z), \notag \\
    &= -k(v - z - R \hat{R}^\top \hat{v} + R \hat{R}^\top z), \notag \\
    &= -k v_{\bar{E}}, \label{Ve_bar}
\end{align}
which implies that $v_{\bar{E}}$ exponentially converges to zero.

Similarly, by substituting for $\Omega_\Delta$ and using some of the identities \eqref{eq:so3_identities},
\begin{align*}
    \dot{R}_{\bar{E}}
    &= -R_{\bar{E}} \Omega_\Delta^\times, \\
    &= -c R_{\bar{E}} ((\hat{v} - z) \times (v - z))^\times, \\
    &= -c R_{\bar{E}} ((v - z)(\hat{v} - z)^\top - (\hat{v} - z) (v - z)^\top), \\
    &= -c R_{\bar{E}} \left( (v - z)( R_{\bar{E}}^\top(v-z - v_{\bar{E}} ))^\top
    \right. \\ &\phantom{=} \left.
    - ( R_{\bar{E}}^\top(v-z - v_{\bar{E}} )) (v - z)^\top \right), \\
    &= -c R_{\bar{E}} (v - z)( v-z - v_{\bar{E}} )^\top R_{\bar{E}}
    \\ &\phantom{=}
    + c (v-z - v_{\bar{E}} ) (v - z)^\top.
\end{align*}

Consider the following candidate Lyapunov function,
\begin{align}
    \Lyap &:= \frac{1}{2}\trace{ (\bar{E} - I) A (\bar{E} - I)^\top }, &
    A &:= \begin{pmatrix}
        I_3 & 0_{3 \times 1} \\ 0_{1 \times 3} & \alpha
    \end{pmatrix},
    \label{eq:value_function}
\end{align}
with $\alpha > \frac{c}{2k} > 0$.
One has,
\begin{align*}
    \Lyap
    &= \frac{1}{2} \trace{
    \begin{pmatrix}
        R_{\bar{E}} - I_3 & v_{\bar{E}} \\ 0_{1 \times 3} & 0
    \end{pmatrix}
    \begin{pmatrix}
        I_3 & 0_{3 \times 1} \\ 0_{1 \times 3} & \alpha
    \end{pmatrix}
    \begin{pmatrix}
        R_{\bar{E}}^\top - I_3 & 0_{3 \times 1} \\ v_{\bar{E}}^\top & 0
    \end{pmatrix}
    }, \\
    &= \frac{1}{2} \trace{
    \begin{pmatrix}
        (R_{\bar{E}} - I_3)(R_{\bar{E}}^\top - I_3) + \alpha v_{\bar{E}}v_{\bar{E}}^\top & 0_{3 \times 1} \\ 0_{1 \times 3} & 0
    \end{pmatrix}
    }, \\
    &= \frac{1}{2} \vert R_{\bar{E}} - I_3 \vert^2 + \frac{\alpha}{2} \vert v_{\bar{E}} \vert^2.
\end{align*}
The dynamics of $\Lyap$ are
\begin{align*}
    \dot{\Lyap}
    &= \left\langle R_{\bar{E}} - I_3, \dot{R}_{\bar{E}} \right\rangle
    + \alpha \left\langle v_{\bar{E}}, \dot{v}_{\bar{E}} \right\rangle, \\
    &= \left\langle R_{\bar{E}} - I_3, -c R_{\bar{E}} (v - z)( v-z - v_{\bar{E}} )^\top R_{\bar{E}} \right\rangle
    \\ &\phantom{=}
    + \left\langle R_{\bar{E}} - I_3 , c (v-z - v_{\bar{E}} ) (v - z)^\top \right\rangle
    \\ &\phantom{=}
    - \alpha \left\langle v_{\bar{E}}, k v_{\bar{E}} \right\rangle, \\
    &= c \left\langle R_{\bar{E}}^\top R_{\bar{E}}^\top - R_{\bar{E}}^\top,  (v - z)(v-z - v_{\bar{E}} )^\top \right\rangle
    \\ &\phantom{=}
    + c \left\langle R_{\bar{E}}^\top - I_3 , (v - z) (v-z - v_{\bar{E}} )^\top \right\rangle
    - k \alpha \vert v_{\bar{E}} \vert^2, \\
    &= c \left\langle R_{\bar{E}}^\top R_{\bar{E}}^\top - I_3, (v - z)( v-z + v_{\bar{E}} )^\top \right\rangle
    - k \alpha \vert v_{\bar{E}} \vert^2, \\
    &= c ( v-z - v_{\bar{E}} )^\top( R_{\bar{E}}^\top R_{\bar{E}}^\top - I_3) (v - z)
    - k \alpha \vert v_{\bar{E}} \vert^2, \\
    &= c (v-z)^\top( R_{\bar{E}}^\top R_{\bar{E}}^\top - I_3) (v - z)
    \\ &\phantom{=}
    + c v_{\bar{E}} ^\top( R_{\bar{E}}^\top R_{\bar{E}}^\top - I_3) (v - z)
    - k \alpha \vert v_{\bar{E}} \vert^2, \\
    &= -\frac{c}{2} \vert (R_{\bar{E}}^2 - I) (v - z) \vert^2 - k \alpha \vert v_{\bar{E}} \vert^2
    \\ &\phantom{=}
    + c v_{\bar{E}} ^\top( R_{\bar{E}}^\top R_{\bar{E}}^\top - I_3) (v - z).
\end{align*}
Then, by using the definition of $\alpha$ and the fact that the Frobenius norm is submultiplicative,
\begin{align}
    \dot{\Lyap} &\leq -\frac{c}{2} \vert (R_{\bar{E}}^2 - I) (v - z) \vert^2 - \frac{c}{2} \vert v_{\bar{E}} \vert^2
    \\ &\phantom{=}
    + c \vert v_{\bar{E}} \vert \vert (R_{\bar{E}}^2 - I) (v - z) \vert
    , \\
    &= -\frac{c}{2} (\vert (R_{\bar{E}}^2 - I) (v - z) \vert +  \vert v_{\bar{E}} \vert)^2, \label{eq:dot-Lyap}
\end{align}
which is clearly negative semi-definite.

It is easy to see that $\dot{\Lyap}$ is uniformly continuous as it is the product, sum and composition of uniformly continuous functions.
It follows from Barbalat's lemma \cite[Lemma 4.2/4.3]{1991_slotine_AppliedNonlinearControl} that $\dot{\Lyap} \to 0$ and $\Lyap \to \Lyap_{\lim}, \; (\Lyap_{\lim}\leq \Lyap(0))$ a positive constant value.
Combining this with the fact that the equilibrium of the sub-state error $v_{\bar{E}}=0$ \eqref{Ve_bar} is uniformly exponentially stable, one ensures that:
\[\Lyap \to \tr(I-R_{\bar{E}}) \to \Lyap_{\lim}, \mbox{ and }  \ddt \trace{I-R_{\bar{E}}} \to 0. \]

From there, one has
\begin{align*}
    \ddt \trace{R_{\bar{E}}}
    = - \trace{R_{\bar{E}} \Omega_\Delta^\times}
    = - \trace{\mathbb{P}_{\so(3)}(R_{\bar{E}}) \Omega_\Delta^\times}
    \to 0,
\end{align*}
which implies that (i) $\mathbb{P}_{\so(3)}(R_{\bar{E}}) \to 0$ or (ii) $\Omega_\Delta \to 0$.
The first case implies that $R_{\bar{E}} \to R_{\bar{E}}^\top$ and hence $R_{\bar{E}}^2 \to R_{\bar{E}}^\top R_{\bar{E}}=I_3$. The second case directly implies that  $\ddt R_{\bar{E}} \to 0$.
Therefore, in either case one concludes that $\ddt R_{\bar{E}}^2 \to 0$.

Now, using the fact $\ddt (v-z) = R a - k(v-z)$ along with the assumption that $R a$ is persistently exciting, direct application of Lemma \ref{lem:persistent} shows that $(v-z)$ is also persistently exciting.

Since $\dot{\Lyap} \to 0$ in \eqref{eq:dot-Lyap} implies that $R_{\bar{E}}^2(v-z) \to v-z$, it follows that $R_{\bar{E}}^2 \to I_3$ by direct application of Lemma \ref{lem:persistence_of_excitation}.
It follows that $R_{\bar{E}} \to R_{\bar{E}}^\top$, and thus $R_{\bar{E}}$ converges to a symmetric matrix.
From this one concludes that $R_{\bar{E}} \to I_3$, or $R_{\bar{E}} \to U D U^\top$ with $D = \diag(1,-1,-1)$ and $U \in \SO(3)$.
For the latter case, note that for any $Q \in \SO(3)$, $\tr(Q) = -1$ if and only if $Q = U D U^\top$ for some $U \in \SO(3)$.
Therefore, $\bar{E}$ converges asymptotically to $\bar{E}_s$ (for which $\Lyap_{\lim}=0$) or to $\bar{E}_u$ (for which $\Lyap_{\lim}=4$).

\underline{Proof of item 2):}
To see that $\bar{E}=(R_{\bar{E}},0)=(U D U^\top,0)$ is an unstable equilibrium for $D = \diag(1,-1,-1)$ and any $U \in \SO(3)$, recall the definition of $\Lyap$ in \eqref{eq:value_function}.
Let $ \omega = U e_1 \in \R^3$ so that
$$R_{\bar{E}} \omega = U D U^\top U e_1 = U e_1 = \omega,$$
and define $Q(s) = R_{\bar{E}} \exp(s \omega^\times)$.
Define also $\Lyap_u(s) = \Lyap(Q(s), 0)$.
Then,
\begin{align*}
    \Lyap_u(s)
    &= \frac{1}{2} \vert Q(s) - I_3 \vert^2 + \frac{\alpha}{2} \vert 0 \vert^2, \\
    &= \frac{1}{2} \trace{(R_{\bar{E}} e^{s \omega^\times } - I_3)(R_{\bar{E}} e^{s \omega^\times } - I_3)^\top} , \\
    &= \trace{I_3 - R_{\bar{E}} e^{s \omega^\times}}.
\end{align*}
By taking a 2nd order Taylor expansion,
\begin{align*}
    \Lyap_u(s)
    &\approx \trace{I_3 - R_{\bar{E}}(I_3 + s\omega^\times + \frac{s^2}{2} \omega^\times \omega^\times) }, \\
    &= \trace{I_3 - R_{\bar{E}}} + s \trace{ R_{\bar{E}} \omega^\times}
    \\ &\phantom{=}
    - \frac{s^2}{2} \trace{ R_{\bar{E}} \omega^\times \omega^\times }.
    \end{align*}
Noting that $\Lyap_u(0) = 4$, $\vert \omega \vert = 1$, and $\trace{ R_{\bar{E}} \omega^\times} = 0$ as $R_{\bar{E}}$ is symmetric, this simplifies to
\begin{align*}
    \Lyap_u(s)&\approx \Lyap_u(0)
    - \frac{s^2}{2} \trace{ R_{\bar{E}} ( \omega \omega^\top - \omega^\top \omega I_3 )  }, \\
    &=  4 - \frac{s^2}{2} \trace{ \omega \omega^\top - R_{\bar{E}} }, \\
    &=  4 - \frac{s^2}{2} (1 - \trace{ R_{\bar{E}} }), \\
    &=  4 - s^2.
\end{align*}
Hence, in any neighbourhood of the equilibrium $(R_{\bar{E}}, 0) = (U D U^\top, 0)$ of $\Lyap$, there exists a perturbation $Q(s)$ such that $\Lyap(Q(s), 0) < \Lyap(R_{\bar{E}} , 0)$.
Therefore the equilibrium $\bar{E}=(U D U^\top, 0)$ is unstable.
Moreover, in any neighbourhood $\mathcal{U}$ of $(U D U^\top, 0)$, there exist initial conditions $(R_{\bar{E}}(0), v_{\bar{E}}(0)) = (Q(s), 0)$ that guarantee $(R_{\bar{E}}, v_{\bar{E}}) \to (I_3, 0)$.
Since the original choice of $\bar{E} \in \bar{E}_u$ was arbitrary, this result holds for all elements of the equilibrium set $\bar{E}_u$.

\underline{Proof of item 3):}
The almost-global asymptotic stability and local exponential stability of $\bar{E}_s= (I_3,0)$ follow from direct application of \citep[Theorem 4.3]{2012_trumpf_AnalysisNonLinearAttitude} along with the persistence of excitation of $(v-z)$ from Lemma \ref{lem:equiv-lemma}.
Finally, supposing that $\bar{E} = (R_{\bar{E}}, v_{\bar{E}}) = (I_3, 0)$, the definitions \eqref{eq:error_definitions} provide
\begin{align*}
    \hat{R} = \hat{R} R^\top R = R_{\bar{E}}^\top R = R,
\end{align*}
and,
\begin{align*}
    \hat{v}
    &= (R \hat{R}^\top)^\top (- v_{\bar{E}} + v - z + R \hat{R}^\top z), \\
    &= R_{\bar{E}}^\top ( v- v_{\bar{E}}) + (I - R_{\bar{E}}^\top) z= v.
\end{align*}
This completes the proof.

\end{proof}

\section{Simulations}

The proposed observer is verified in the following simulation.
Define $R(0) = I_3$, $v(0) = 0$ and let the input signals be
\begin{align*}
    \Omega(t) &= \begin{pmatrix}
        0 & 0 & 1
    \end{pmatrix}^\top, \\
    a(t) &= \begin{pmatrix}
        5 \sin(5t) & 0 & -9.81
    \end{pmatrix}^\top, \\
    g &= \begin{pmatrix}
        0 & 0 & 9.81
    \end{pmatrix}^\top.
\end{align*}
Consider the observer defined in Theorem \ref{thm:complete_observer_design}, and let the initial state be
\begin{align*}
    \hat{R}(0) &= \exp ((\begin{pmatrix}
        2 & -1 & 1.5
    \end{pmatrix}^\top)^\times), \\
    \hat{v}(0) &= \begin{pmatrix}
        3 & -2 & 2
    \end{pmatrix}^\top, \\
    z(0) &= \begin{pmatrix}
        0 & 0 & 0
    \end{pmatrix}^\top.
\end{align*}
The gains are chosen $k = 5,c = 1$.

\begin{figure*}
    \centering
    \includegraphics[width=1.0\textwidth]{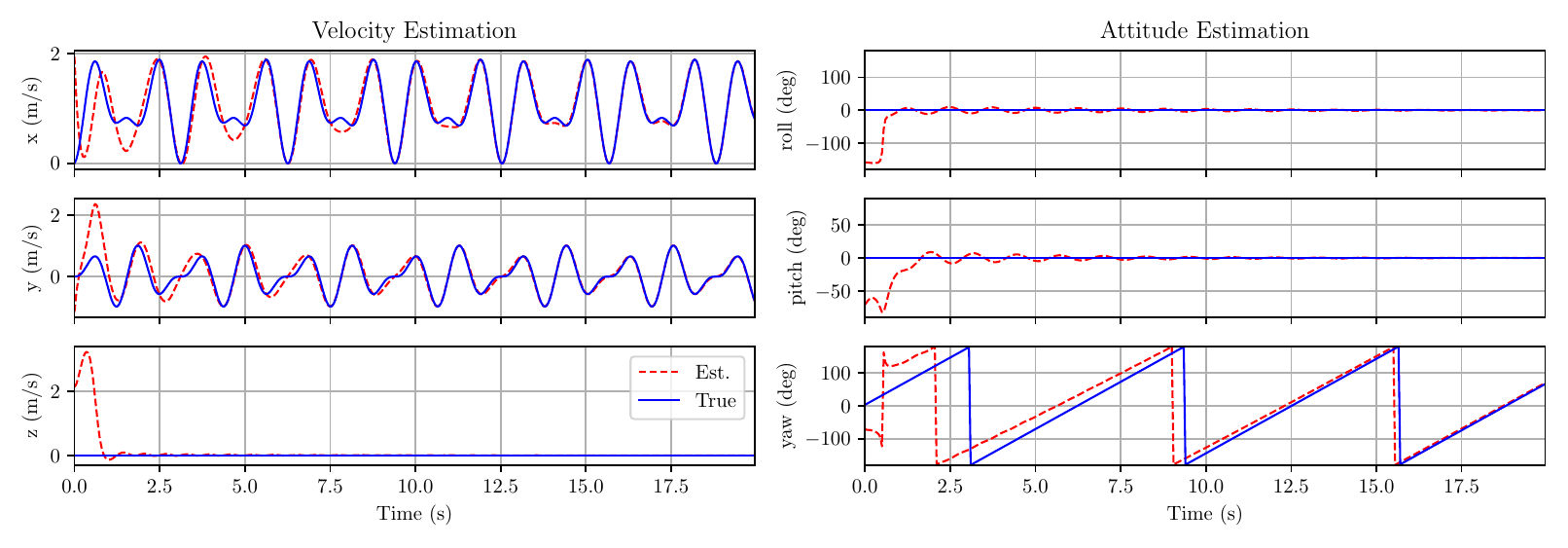}
    \caption{A comparison of the true (solid blue) and estimated (dashed red) velocity and attitude of the example system over time.
    The pitch and roll components of the attitude converge more quickly than the yaw for the observer. 
    }
    \label{fig:estimation}
\end{figure*}

The true system and the observer equations were integrated over a period of 20~s using Euler integration with a time-step of 0.05~s.
Figure \ref{fig:estimation} shows the true and estimated system trajectories over time.
It is clear to see that both the velocity and attitude estimates converge to the true values.
Interestingly, the pitch and roll of the attitude converge more quickly than the yaw, as these are the directions associated with estimating the gravity vector in the body-fixed frame.
Figure \ref{fig:error} shows the attitude error, velocity error, and Lyapunov value of the observer over time.
From these, it can be seen that there is a fast initial convergence followed by a slower second phase of convergence.
Regardless, the Lyapunov value is always clearly decreasing, and the observer is able to estimate the true attitude and velocity despite a large initial attitude error of more than 150$^\circ$.

\begin{figure}
    \centering
    \includegraphics[width=0.75\linewidth]{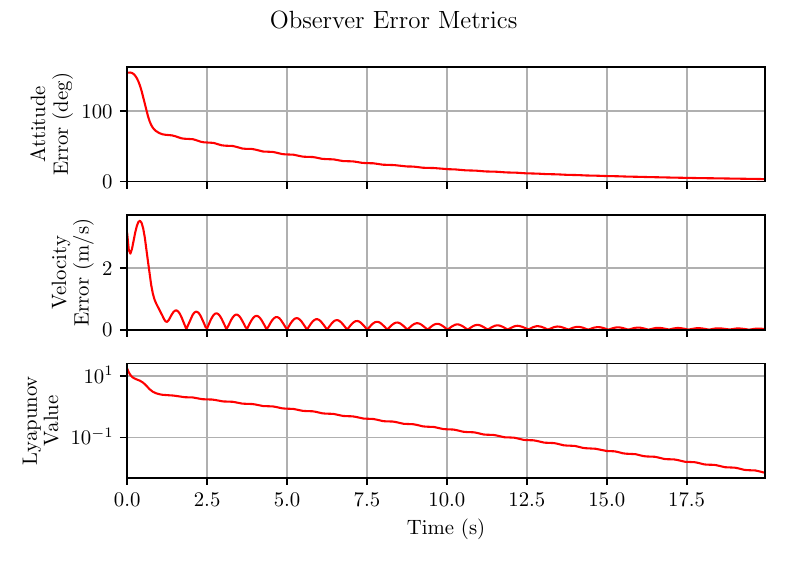}
    \caption{The evolution of the error metrics of the observer state over time.
    The attitude error, velocity error, and Lyapunov value all show a sharp initial decrease, followed by a slower second phase of convergence.
    }
    \label{fig:error}
\end{figure}

\section{Conclusion}

This paper presents a novel constructive observer design for inertial VAA that exploits a modelling of the system dynamics motivated by recent advances in equivariant observer theory \citep{2021_vangoor_AutonomousErrorConstructive}.
This theory provides an observer architecture that is synchronous with the system trajectories, and this paper proposes correction terms that are shown to lead to almost globally asymptotically and locally exponentially stable error dynamics.
To the authors' knowledge, this design is the first to feature such stability properties.
Finally, the provided simulations verify the observer is indeed able to converge from even a large initial error.

\section*{Appendix}

\begin{lemma}\label{lem:persistent}
Let $a(t) \in \R^3$ be uniformly continuous and bounded.
Suppose $a(t)$ is persistently exciting; that is, there exist $\mu, T>0$ such that, for all $t \geq 0$ and $b \in \Sph^2$,
$\vert b^\times a(\tau) \vert \geq \mu$ for some $\tau \in [t, t+T)$.
If $x(t)$ satisfies $\dot{x} = a - k x$ for some fixed $k>0$, then $x(t)$ is also uniformly continuous, bounded and persistently exciting.
\end{lemma}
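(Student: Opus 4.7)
The plan is to establish the three claimed properties in order of increasing difficulty: boundedness, uniform continuity, then persistent excitation. Boundedness follows from the variation-of-constants formula applied to the stable linear system $\dot x = -kx + a$ with $a$ bounded, giving $|x(t)| \le |x(0)| e^{-kt} + \|a\|_\infty/k$. Uniform continuity then follows because $|\dot x| \le \|a\|_\infty + k\sup_t|x(t)|$ is bounded, so $x$ is Lipschitz.

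The core of the proof is the PE property. For an arbitrary $b \in \Sph^2$, I would introduce the projected signal $w(t) := b^\times x(t)$, which by linearity of the cross product satisfies the analogous equation $\dot w = b^\times a - k w$. The strategy is a contradiction argument: if $|w|$ were to stay small over an interval long enough to contain one PE window of $a$, then integrating $\dot w$ would force the integral of $b^\times a$ over that window to also be small---but uniform continuity together with the PE of $a$ makes this impossible.

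Concretely, I would first use uniform continuity of $a$ to choose $\delta>0$, independent of $t$ and of $b$ (since $\|b^\times\|_{\mathrm{op}} = |b| = 1$), such that $|b^\times a(s) - b^\times a(\tau_0)| < \mu/2$ whenever $|s-\tau_0| < \delta$. Fix any $t\ge 0$; the PE hypothesis on $a$ gives $\tau_0 \in [t,t+T)$ with $|b^\times a(\tau_0)|\ge \mu$, and a triangle-inequality estimate then yields $\bigl|\int_{\tau_0}^{\tau_0+\delta} b^\times a(s)\,ds\bigr| \ge \delta\mu/2$. Suppose for contradiction that $|w(\tau)| < \epsilon$ throughout $[t, t+T+\delta]$. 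Integrating the ODE for $w$ over $[\tau_0,\tau_0+\delta]$ bounds the same integral above by $|w(\tau_0+\delta)-w(\tau_0)| + k\int|w|\,ds \le \epsilon(2+k\delta)$. Comparing the two bounds gives $\epsilon \ge \delta\mu/(2(2+k\delta))$, so $x$ is PE with explicit constants $\mu':=\delta\mu/(2(2+k\delta))$ and $T':= T+\delta$.

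The main obstacle is verifying that these PE constants are genuinely uniform over both $t\ge 0$ and $b \in \Sph^2$. Uniformity in $t$ is inherited directly from the uniform continuity hypothesis on $a$; uniformity in $b$ is automatic from the identity $\|b^\times\|_{\mathrm{op}} = |b|$, which bounds the modulus of continuity of $b^\times a$ by that of $a$ independently of the chosen direction. Once these uniformities are in hand, the quantitative contradiction argument yields PE of $x$ with universal constants, completing the proof.
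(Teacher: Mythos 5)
Your proposal is correct and follows essentially the same route as the paper's proof: boundedness and uniform continuity come directly from the stable linear dynamics, and persistent excitation is obtained by a contradiction argument that uses uniform continuity of $a$ to keep $b^\times a$ above $\mu/2$ on a sub-interval of length $\delta$ following the excitation time $\tau_0$, then integrates the ODE for $b^\times x$ over that sub-interval to show $|b^\times x|$ cannot stay small there. The only difference is bookkeeping: you carry the interval length $\delta$ through the estimate, yielding $\mu' = \delta\mu/(2(2+k\delta))$ and $T' = T+\delta$, whereas the paper's substitution to an $\int_0^1$ integral silently drops that Jacobian factor (giving $\mu' = \mu/(2(k+2))$), so your constant is in fact the more carefully derived one.
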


\begin{arxivonly}
\begin{proof}
The uniform continuity and boundedness of $x$ follow immediately from its dynamics.
Fix $\mu, T$ as in the definition of the persistence of excitation of $a$.
Since $a$ is uniformly continuous, there exists $\delta \in (0, T)$ such that, if $\vert t_2 - t_1 \vert < \delta$, then $\vert a(t_2) - a(t_1) \vert < \mu/2$.

Let $T' := T+\delta$, and let $b \in \Sph^2$ and $t \geq 0$ be arbitrary.
There must exist some $\tau \in [t, t+T)$ such that $\vert b^\times a(\tau) \vert > \mu$.
Moreover, for any $s \in [0,1)$, one has that
\begin{align*}
    \vert b^\times a(\tau + s \delta) \vert &= \vert b^\times a(\tau) + b^\times (a(\tau + s \delta) - a(\tau)) \vert, \\
    &\geq \vert b^\times a(\tau) \vert - \vert b^\times (a(\tau + s \delta) - a(\tau)) \vert, \\
    &\geq \vert b^\times a(\tau) \vert - \vert b \vert \vert a(\tau + s \delta) - a(\tau) \vert, \\
    &> \mu - \mu/2, \\
    &= \mu / 2.
\end{align*}
From there, one ensures that $\vert b^\times a(\tau + s\delta)\vert > 0$ for all $s \in [0,1)$.

Let $\mu' := \frac{\mu}{2(k+2)}$ and suppose $\vert b^\times x(\tau+s\delta) \vert \leq \mu'$ for all $s \in [0,1)$.
Then,
\begin{align*}
    &\vert b^\times x(\tau+\delta) - b^\times x(\tau) \vert \\
    &= \left\vert
        \int_0^1 b^\times \dot{x}(\tau+s \delta) \; \td s
    \right\vert, \\
    &= \left\vert
        \int_0^1 b^\times a(\tau + s\delta) - k b^\times x(\tau+s\delta) \; \td s
    \right\vert, \\
    &\geq \left\vert
        \int_0^1 b^\times a(\tau + s\delta) \td s
    \right\vert
    - \left\vert
        \int_0^1 k b^\times x(\tau+s\delta) \td s
    \right\vert, \\
    &\geq \int_0^1 \left\vert
         b^\times a(\tau + s\delta)
    \right\vert \td s
    - k \int_0^1 \left\vert
        b^\times x(\tau+s\delta)
    \right\vert  \td s, \\
    &> \int_0^1 \mu/2 \td s
    - k \int_0^1 \mu' \td s, \\
    &= \mu/2 - k \mu', \\
    &= \mu/2 - k \mu / (2(k+2)), \\
    &= ((k+2)\mu - k \mu) / (2(k+2)), \\
    &= 2 \mu / (2(k+2)), \\
    &= 2\mu'
\end{align*}
This contradicts the assumption, and therefore there must exist $s \in [0,1)$ such that $\vert b^\times x(\tau+s\delta) \vert > \mu'$.
Put differently, there exists $\tau' = \tau + s\delta \in [0, T')$ such that $\vert b^\times x(\tau') \vert > \mu'$.
Hence $x$ is persistently exciting, as required.

\end{proof}
\end{arxivonly}

\begin{lemma}\label{lem:equiv-lemma}
Assume that $x$ is a uniformly continuous, bounded and persistently exciting signal. Then there exist $\mu', T'>0$ such that, for all $t \geq 0$:
\[ \lambda_2 \left(\int_t^{t+T'}xx^\top \td\tau\right)\geq \mu',\]
with $\lambda_2(S)$ denotes the second largest eigenvalue of a symmetric
matrix $S \in \R^{3\times3}$.
\end{lemma}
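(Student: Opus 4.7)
The plan is to convert the pointwise persistence-of-excitation hypothesis on $x$ into a spectral bound on the correlation matrix $M(t) := \int_t^{t+T'} x(\tau) x(\tau)^\top \, d\tau$ using the $\so(3)$ identity $|b^\times x|^2 = |x|^2 - (b^\top x)^2$ (valid for $|b|=1$, and immediate from \eqref{eq:so3_identities}). Integrating this identity yields, for every unit $b \in \Sph^2$,
\[
\int_t^{t+T'} |b^\times x(\tau)|^2 \, d\tau \;=\; \tr M(t) - b^\top M(t) b.
\]
The crucial choice is to pick $b := b_1(t)$, a unit eigenvector of $M(t)$ associated with its largest eigenvalue $\lambda_1(M(t))$, so that the right-hand side collapses to $\lambda_2(M(t)) + \lambda_3(M(t))$. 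A uniform lower bound on this sum then yields a uniform lower bound on $\lambda_2(M(t))$ via the PSD ordering of the eigenvalues.

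Next I would transfer the pointwise PE hypothesis into an integral bound via uniform continuity. Because the operator norm of $b^\times$ is $|b|=1$, the uniform continuity of $x$ gives a modulus $\delta \in (0,T)$ such that $|b^\times x(\tau_1) - b^\times x(\tau_2)| < \mu/2$ whenever $|\tau_1 - \tau_2| < \delta$, uniformly in $b \in \Sph^2$. Applying the PE hypothesis to the data-dependent direction $b_1(t)$ produces some $\tau^* \in [t, t+T)$ with $|b_1(t)^\times x(\tau^*)| \geq \mu$, and the reverse triangle inequality then forces $|b_1(t)^\times x(\tau)| > \mu/2$ on the entire subinterval $[\tau^*, \tau^* + \delta) \subseteq [t, t+T+\delta)$.

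Setting $T' := T + \delta$ and integrating the pointwise lower bound over that subinterval yields
\[
\lambda_2(M(t)) + \lambda_3(M(t)) \;\geq\; \delta \mu^2 / 4.
\]
Since $M(t)$ is positive semidefinite one has $0 \leq \lambda_3(M(t)) \leq \lambda_2(M(t))$, and therefore $\lambda_2(M(t)) \geq \delta \mu^2 / 8$. Taking $\mu' := \delta \mu^2 / 8$ then closes the argument uniformly in $t \geq 0$.

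The main subtlety I anticipate is the choice of the \emph{top} eigenvector $b_1(t)$ of $M(t)$ as the test direction in the PE hypothesis: this is what converts $\tr M(t) - b^\top M(t) b$ into the sum of the two smallest eigenvalues, thereby giving access to $\lambda_2$ rather than to $\lambda_1$, and it is only legitimate because the PE hypothesis is required to hold uniformly over \emph{all} $b \in \Sph^2$ (including the data-dependent direction $b_1(t)$). The remaining ingredients --- the $\so(3)$ algebraic identity, the uniform-continuity passage from a pointwise lower bound to an integral one, and the PSD eigenvalue ordering --- are essentially bookkeeping.
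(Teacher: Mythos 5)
Your proof is correct and follows essentially the same route as the paper: the identity $\vert b^\times x\vert^2 = b^\top(\vert x\vert^2 I - xx^\top)b$, a uniform-continuity argument extending the pointwise PE bound to an integral over a subinterval of length $\delta$, and the observation that the resulting bound controls $\lambda_2 + \lambda_3$ of $\int xx^\top$, hence $\lambda_2$. The only difference is cosmetic---you evaluate the quadratic form at the top eigenvector $b_1(t)$ of $M(t)$, while the paper bounds it uniformly over all $b\in\Sph^2$ and invokes $\lambda_{\min}\bigl(\tr(M)I - M\bigr)=\lambda_2+\lambda_3$---and you usefully make explicit the quantitative uniform-continuity step ($T'=T+\delta$, $\mu'=\delta\mu^2/8$) that the paper leaves implicit.
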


\begin{arxivonly}
\begin{proof}
Since $x(t)$ is a persistently exciting signal, then for all $t \geq 0$ and $b \in \Sph^2$ there exists $\tau \in [t,t+T)$ such that $\vert b^\times x(\tau)\vert > \mu$.
This implies that:
\begin{align*}
\vert b^\times x(\tau)\vert^2 &=b^\top(\vert x \vert^2I-xx^\top)b > \mu^2
\end{align*}
Now, since $x$ is uniformly continuous and bounded, it follows that there exists $\mu''$, such that:
\begin{align*}
\int_{t}^{t+T} \vert b^\times x(\tau)\vert^2 \td\tau
&=b^\top\left(\int_{t}^{t+T'}(\vert x \vert^2I-xx^\top)\td\tau \right)b
> \mu''.
\end{align*}
Equivalently, one has that
\begin{align}
    \int_{t}^{t+T}(\vert x \vert^2 I_3 - x x^\top)\td\tau &> \mu'' I_3, \label{eq:projector_pd} \\
    \lambda_{\min} \left(\int_{t}^{t+T'}(\vert x \vert^2I-xx^\top)\td\tau\right) &> \mu'' \label{eq:projector_eigenval}.
\end{align}
Taking the trace of both sides of \eqref{eq:projector_pd},
\begin{align*}
    \trace{
        \int_{t}^{t+T}(\vert x \vert^2 I_3 - x x^\top) \td\tau
    }
    = 2 \trace{
        \int_{t}^{t+T} x x^\top \td\tau
    }
    > 3 \mu''
\end{align*}
Let $(\lambda_1,\lambda_2,\lambda_3)$ denote the eigenvalues of $\int_{t}^{t+T'}xx^\top \td\tau$ such that $\lambda_1\geq \lambda_2 \geq \lambda_3$.\
Then $\lambda_1+\lambda_2+\lambda_3 > \frac{3}{2}\mu''$.
Additionally, from \eqref{eq:projector_eigenval}, $\lambda_{\min} = \lambda_2+\lambda_3 > \mu''$.
Since $\lambda_2 \geq \lambda_3$, this ensures that $\lambda_2\geq \frac{1}{2}\mu''$.
\end{proof}
\end{arxivonly}

\begin{lemma} \label{lem:persistence_of_excitation}
Let $Q \in \SO(3)$ such that $\dot{Q}$ converges to zero. Consider a bounded and uniformly continuous persistently exciting signal $x \in \R^3$. That is, there exist constants $\mu, \delta > 0$ such that
\begin{align}
    \lambda_2 \left(\int_t^{t+T'}xx^\top \td\tau\right)\geq \mu',
    \label{eq:persistent_excitation}
\end{align}
according to lemma \ref{lem:equiv-lemma}.
If $(I-Q) x \to 0$, then $Q$ converges to $I$.

\end{lemma}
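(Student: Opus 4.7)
The plan is to derive a two-sided estimate on the quantity
\[
J(t) := \int_t^{t+T'} \left|(I-Q(t))\, x(\tau)\right|^2 \td\tau = \trace{(I-Q(t))^\top(I-Q(t))\, S(t)},
\]
where $S(t) := \int_t^{t+T'} x(\tau) x(\tau)^\top \td\tau$, that will force the rotation angle of $Q(t)$ to vanish. Using the axis-angle parametrisation $Q(t) = \exp(\theta(t)\omega(t)^\times)$ with $|\omega(t)|=1$, a direct computation gives $(I-Q)^\top(I-Q) = 2(1-\cos\theta)\,P_\perp$, where $P_\perp := I - \omega\omega^\top$ is the rank-two orthogonal projector onto the plane normal to the rotation axis. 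Hence
\[
J(t) = 2(1-\cos\theta(t))\,\trace{P_\perp(t)\, S(t)},
\]
so $Q(t) \to I$ is equivalent to $(1-\cos\theta(t))\trace{P_\perp(t)\,S(t)} \to 0$ with the projector trace bounded below.

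For the upper bound I split $(I-Q(t))x(\tau) = (I-Q(\tau))x(\tau) + (Q(\tau)-Q(t))x(\tau)$. The first summand is uniformly small on $[t,t+T']$ for $t$ large by the hypothesis $(I-Q)x\to 0$. The second is controlled via $|Q(\tau)-Q(t)| \leq \sup_{s\in[t,t+T']}|\dot Q(s)|\cdot T'$, which tends to zero because $\dot Q \to 0$, multiplied by the bounded signal $x$. Together these give $J(t) \to 0$. For the lower bound, Lemma \ref{lem:equiv-lemma} provides $\lambda_2(S(t)) \geq \mu'$ for all $t$. Since $P_\perp(t)$ is an orthogonal projector of rank two and $S(t)$ is positive semidefinite, the Cauchy interlacing inequality yields $\trace{P_\perp(t) S(t)} \geq \lambda_2(S(t)) + \lambda_3(S(t)) \geq \mu'$. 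Combining, $2\mu'(1-\cos\theta(t)) \leq J(t) \to 0$, so $\theta(t) \to 0$ and therefore $Q(t) \to I$.

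The main obstacle is the upper bound, specifically the step of freezing $Q$ at time $t$ inside the integrand; this is precisely what the assumption $\dot Q \to 0$ buys us. Without that assumption, $Q(\tau)$ could drift across the excitation window in a way that keeps $(I-Q(\tau))x(\tau)$ small pointwise while $Q$ itself never settles to the identity, defeating persistence of excitation. The remaining pieces are routine: the rank-two lower bound on $\trace{P_\perp(t) S(t)}$ is a clean interlacing computation, and the reduction to the angle $\theta(t)$ via $(I-Q)^\top(I-Q) = 2(1-\cos\theta)P_\perp$ is a standard $\SO(3)$ identity.
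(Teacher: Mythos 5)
Your proof is correct, and it takes a genuinely different route from the paper's. The paper works with $M(t) = \int_t^{t+\delta} x x^\top \td\tau$ directly: it replaces $x$ by $Qx$ inside the integral (justified asymptotically by $(I-Q)x \to 0$), integrates by parts using $\dot Q \to 0$ to discard the derivative term, concludes $Q(t+\delta)M(t) \to M(t)$, and then argues that the eigenvectors $m_1, m_2$ of $M$ associated with the two largest eigenvalues are fixed by $Q$, so $\det(Q)=1$ forces $Q \to I$. You instead study the frozen-rotation energy $J(t) = \trace{(I-Q(t))^\top (I-Q(t)) S(t)}$: the upper bound comes from splitting off $(I-Q(\tau))x(\tau)$ and $(Q(\tau)-Q(t))x(\tau)$, exactly the same use of the two hypotheses as the paper's integration by parts, while the lower bound $\trace{P_\perp S} \geq \lambda_2(S) + \lambda_3(S) \geq \mu'$ via the Rodrigues identity $(I-Q)^\top(I-Q) = 2(1-\cos\theta)(I-\omega\omega^\top)$ and interlacing replaces the paper's eigenvector argument. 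Your version has some advantages: it avoids reasoning about eigenvectors of the time-varying matrix $M(t)$ (the paper tacitly assumes $\lambda_1 > \lambda_2$ and that the eigenvectors behave well in the limit), it never needs the $\det(Q)=1$ step, and it yields a quantitative estimate $\vert Q(t) - I\vert^2 = 4(1-\cos\theta(t)) \leq 2 J(t)/\mu'$ rather than a purely qualitative limit. One cosmetic note: the inequality $\trace{P_\perp S} \geq \lambda_2 + \lambda_3$ is the Ky Fan extremal trace bound, though as you say it follows from Cauchy interlacing applied to the compression of $S$ onto the range of $P_\perp$; either justification is fine, and the pointwise use of the axis-angle identity needs no continuity of the axis $\omega(t)$, so the non-uniqueness of $\omega$ at $\theta \in \{0,\pi\}$ causes no trouble.
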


\begin{arxivonly}
\begin{proof}
Define $\varepsilon, \delta > 0$ so that, for all $t \geq 0$, one has
\begin{align*}
    M(t) := \lambda_2 \left(\int_{t}^{t+\delta} x x^\top \td \tau \right)> \varepsilon.
\end{align*}
Integrating by parts yields
\begin{align*}
    M(t)
    &= \int_{t}^{t+\delta} x x^\top \td \tau, \\
    &= \int_{t}^{t+\delta} Q x x^\top \td \tau, \\
    &= \left[ Q (t+\tau) \int_t^{t+\tau} x(s) x(s)^\top \td s \right]_0^\delta
    \\ &\phantom{==}
    -\int_t^{t+\delta} \left( \frac{\td}{\td \tau} Q(\tau) \right) \left( \int_t^{t+\tau} x(s) x(s)^\top \td s \right) \td \tau, \\
    &\to \left[ Q (t+\tau) \int_t^{t+\tau} x(s) x(s)^\top \td s \right]_0^\delta, \\
    &= Q (t+\delta) \int_t^{t+\delta} x(\tau) x(\tau)^\top \td \tau, \\
    &= Q(t+\delta) M(t)
\end{align*}
Then it follows that $Q M \to M$. Let $(m_1,m_2,m_3)$ be the three orthonormal eigenvectors of $M$ associated with the eigenvalues $(\lambda_1,\lambda_2,\lambda_3)$ with $\lambda_1>\lambda_2 \geq \varepsilon$.
It is straight forward to verify that $(m_1,m_2)$ are also eigenvectors of $Q$ associated to the eigenvalue $1$. Combining this with the fact that $\det(Q)=1$, one concludes that $Q \to I$.
\end{proof}
\end{arxivonly}

\begin{nolcosonly}
The proofs of these lemmas have been omitted to meet the requirements of the conference page limit.
Please contact the authors for the details of the proofs.
\end{nolcosonly}

\bibliographystyle{plainnat}
\bibliography{NOLCOS_2022_VAA}

\end{document}